\newcommand {\C} {{\mathbb{C}}}
\newcommand {\M} {{\mathcal{M}}}
\newcommand {\R} {{\bf{R}}}
\newtheorem{theorem}{Theorem}[section]
\newtheorem{lemma}[theorem]{Lemma}
\theoremstyle{definition}
\newtheorem{definition}[theorem]{Definition}
\newtheorem{example}[theorem]{Example}
\theoremstyle{remark}
\numberwithin{equation}{section}
\begin{document}

\title[A Note on a Differential Galois Approach to   Path Integrals]
{A Note on a Differential Galois Approach to Path Integral}

\author[J.J.~Morales-Ruiz]{Juan J. Morales-Ruiz}
\thanks{{\it e-mail}: juan.morales-ruiz@upm.es}
\address{Universidad Politécnica de  Madrid, Spain.}
\email{juan.morales-ruiz@upm.es}

\subjclass[2010]{81S40 37J30  37J35}

\keywords{Differential Galois theory, Path Integrals,  Semiclassical Approximation, Complete Integrability}

\begin{abstract}
We point out the relevance of the Differential Galois Theory of linear differential equations for the exact semiclassical computations in path integrals in quantum mechanics. The main tool will be  a necessary condition for complete integrability of classical Hamiltonian systems obtained by Ramis and myself : if a finite dimensional complex analytical Hamiltonian system is completely integrable with meromorphic first integrals, then the identity component of the Galois group of the variational equation around any integral curve must be abelian. A corollary of this result is that, for  finite dimensional integrable Hamiltonian systems,  the semiclassical approach is computable in closed form in the framework of the Differential Galois Theory. This explains in a very precise  way the success of quantum semiclassical computations for integrable  Hamiltonian systems.

\end{abstract}

\maketitle

\section*{Introduction}

In 1942 Feynman in his Phd Thesis discovered a new formulation to quantum mechanics, as an alternative to the more classical formulations: the Schrödinger's operator  and the Heisenberg's matrix ones (\cite{FEYT}). The Feynman approach  considered the kernel propagator (probability amplitudes)  between   two states as a suitable sum of the  complex exponential of the classical action along all paths joining the states: {\it Feynman Path Integrals Approach}. More than 20 years before Feynman, Wiener also consider path integrals to study probabilities between two states in order to understand some properties of stochastic motions, like Einstein's study of the Brownian motion, by means of the diffusion  equation. Although there are some  differences between the Wiener and Feymann approaches, a kind of dictionary is possible between them (see, for instance,  \cite{CHADE}). Today Feynman's approach is one of the most successful ways to study quantum systems, either finite dimensional (quantum mechanics) or not (quantum fields), including  bosons,  fermions or even strings. Fields with gauge symmetries, today necessary in any reasonable quantum field theory,  have also been integrated  in the path integrals formalism.

The path integrals approach to quantum mechanics is the closets quantum approach to classical mechanics. As Cécile DeWitt-Morette said \cite{morette3}

\medskip

``{\it The role played by equilibrium points in  the  study of classical systems is analogous to  the role played by classical paths in quantum systems.}"

\medskip

This is also in complete agreement with  Carles Simó's  idea on the fundamental role played by dynamical systems methods in many areas \cite{sim}. With this  in mind it is very natural to consider the relevance of the variational equations along classical solutions for semiclassical expansions in quantum systems. We remark that it seems that, aside  from Feynman himself,   Cécile Morette was the first researcher to write a work on Feynman's  path integrals approach to quantum mechanics (see \cite{GROSTE}, chapter 1).

Since Feynman, the semiclassical expansion is one of the most effective methods to compute propagators in quantum systems. The idea is as follows, starting  from a {\it classical path} $\gamma$, ${\bf x}_{cl}={\bf x}_{cl}(t)$ (ie, solution of the Lagrange or Hamilton equation), to consider  paths  given by small quantum fluctuations around it, ${\bf x}_{cl}+\boldsymbol{\xi}$, and then expand the amplitudes around the classical solution in powers  of $\hbar$.  The fluctuations in this approximation are then expressed by means of the determinant of the differential operator of the variational equation (a functional determinant) with suitable boundary conditions. By the so-called Gelfand-Yaglom method, this infinite determinant can be obtained as  the determinant of a block of the fundamental matrix of the variational equation with standard initial  conditions: we move from an apparently hard spectral problem to an initial value problem. This determinant, via its connection with the Van Vleck-Morette determinant, can also be obtained by other methods,  but in that case {\it we need to know the global action in closed form}, ie,  a complete integral of the Hamilton-Jacobi equation and we  do not assume here that:  for many completely integrable systems it is not an easy task  to achieve a closed form expression of this integral.

In 2001 in a joint work with Ramis we obtained a necessary condition for complete integrability of Hamiltonian systems (with meromorphic first integrals) in terms of the Galois group of the variational equation around a particular integral curve. The Galois group of the variational equation is defined  in the framework of the Differential Galois Theory of ordinary linear differential equations,  introduced by Picard and Vessiot at the end of the XIX century. One of the consequences of our result is that when the Hamiltonian system is integrable, then the general solution of the variational equation around any particular integral curve can be obtained in closed form {\it in a very precise way} (see the appendix) and hence,  via the Gelfand-Yaglom method,  its functional determinant can also be obtained in closed form. Thus, {\it the Differential Galois Theory is at the base  of path integrals  semiclassical explicit computations for integrable Hamiltonian systems}. The goal of this note is to explain the details of this remark.

 As far as I know this note is  the first instance in which the Differential Galois Theory is applied in a {\it systematic way} to path integrals.  Previous applications by  Acosta-Humánez and Suazo,  are restricted to one-dimensional  time-dependent harmonic  oscillators (\cite{ACSU,ACSU2}). We are convinced that these works could be naturally included in our approach here.

 For the easy of reading  an appendix with  the relevant definitions and results of the Differential Galois Theory is included. I tried to keep my exposition  as elementary as possible,  even suitable   for readers with  no  previous contact with this theory.
 
 \medskip
 
 When this paper was finished I became aware that formula \eqref{eq:jota} is not new: it has been obtained by the more traditional D'Alembert's reduction of the order (\cite{KLEI}, section 4.3). As it was pointed out in the appendix,  our gauge reduction method for  the variational equations can be considered as a generalization of D'Alembert's reduction.

\section{Semiclassical aproximation}
\label{sec:semi}

Our approach to the path integrals will be in the Hamiltonian formalism framework, although for simplicity we will only consider the propagators in the more classical coordinate representation, which is very similar to the  path integrals in the Lagrangian formalism. A good reference on classical mechanical systems  is \cite{arn}.

Along this paper the time variable will be denoted as $\tau$, instead of $t$. The reason of this notation will become  clear later.

Given a dynamical system,

\begin{equation}\label{sd}
\dot {\bf {z}}= X(\bf {z}),
\end{equation}
 a particular solution  ${\bf z}={\bf z}(\tau)$, defines  an integral curve  $\Gamma_{\R}$. At the end of the nineteenth century Poincar\'e  introduced the
 variational equation (\emph{VE}) along  $\Gamma_{\R}$,

 \begin{equation}\label{ev}
\dot{\boldsymbol{\zeta}}= X^\prime({\bf z}(\tau))\boldsymbol{\zeta},
\end{equation}
as the fundamental tool to study the behavior of the dynamical system \eqref{sd} in a
neighborhood of  $\Gamma_{\R}$ (stability of periodic orbits, etc.). Equation \eqref{ev} describes
the linear part of the flow of \eqref{sd} around this particular integral curve.
In the formalism of the calculus of variations the variational equations are called Jacobi equations,  and the solutions of the variational equations   Jacobi fields. Interpreting \eqref{sd} as a partial differential equation, ie,
 denoting  the flow by $\boldsymbol{\phi}(\tau,{\bf z})$,

 $$
 \frac{\partial \boldsymbol{\phi}(\tau,{\bf z})}{\partial \tau}=X(\boldsymbol{\phi}(\tau,{\bf z})),$$
 the variational equation is obtained by  prolongation of  \eqref{sd}, ie,  derivating  the above equation  with respect to $\bf z$, we obtain that  $$\frac{\partial \boldsymbol{\phi}(\tau,{\bf z}_0)}{\partial \bf z}$$ is  a fundamental matrix of the variational equation: the linear part of the flow with respect to initial conditions.

Let $H=H({\bf x},{\bf y},\tau)$ be a (classical) real Hamiltonian  function with $n$ degrees of freedom, defining the Hamiltonian system

\begin{equation}
\dot x_i=\partial H/\partial y_i,
\dot y_i=-\partial
H/\partial x_i, \, i=1,...,n,
\label{ham}
\end{equation}
then  we can write the variational equation  of \eqref{ham} along  an integral curve
 ${\bf x}={\bf x}(\tau)$, ${\bf y}={\bf y}(\tau) $, 

\begin{equation}
\begin{pmatrix} \dot{\boldsymbol{\xi}}\\ \dot{\boldsymbol{\eta}} \end{pmatrix}= J H^{\prime\prime}({\bf x}(\tau),{\bf y}(\tau))\begin{pmatrix} {\boldsymbol{\xi}}\\ {\boldsymbol{\eta}} \end{pmatrix}, \label{ve}
\end{equation}
where $H^{\prime\prime}({\bf x}(\tau),{\bf y}(\tau))$ is the Hessian matrix of $H$ evaluated at the integral curve and
$$J=\begin{pmatrix} 0& {\bf 1}_n\\
-{\bf 1}_n&0\end{pmatrix},$$
is the symplectic matrix, with ${\bf 1}_n$ the identity matrix of dimension $n$.

Now let $\Phi (\tau,t_0)$ be  a fundamental matrix of the variational equation of the Hamiltonian system along the classical solution $\Gamma_{\R}$, with initial condition $\Phi(t_0,t_0)={\bf 1}_{2n}$. Then, it will be relevant later the decomposition  

\begin{equation}\Phi (\tau,t_0)=\begin{pmatrix} H(\tau,t_0)&J(\tau,t_0)\\L(\tau,t_0)&P(\tau,t_0)\end{pmatrix} \label{eq:funda}\end{equation}
 of the above  matrix in four squared boxes of dimension $n$.

The (integral kernel)  propagator, $K({\bf x},t|{\bf x}_0,t_0)$ in the coordinate representation,  between the points $({\bf x},t)$ and $({\bf x}_0,t_0)$ gives the solution of the Cauchy problem
for the time dependent Schrödinger equation, ie,

\begin{equation}
\psi({\bf x},t)=\int_{{\R}^n} K({\bf x},t|{\bf x}_0,t_0)\, \psi( {\bf x}_0,t_0)) \, d {\bf x}_0
\end{equation}
is the solution of the Cauchy problem

\begin{equation}
i\hbar\frac{\partial}{\partial t}\psi(t)=\hat{H} \psi(t),\quad \psi_{t_0}({\bf x})=\psi({\bf x},t_0),\label{eq:prop}
\end{equation}
being $\hat{H}$ the Hamiltonian operator corresponding to the classical Hamiltonian $H$. We recall that, for a 1-degree of freedom system with

$$H=\frac 1{2m} y^2+V(x)$$
then,  as the momentum corresponds to the operator $\hat p=-i\hbar \frac d{dx}$, and the position to the  operator $\hat x=x Id$ (denoted again by $x$),

$$\hat H=-\frac {\hbar^2}{2m}\frac {d^2}{dx^2}+V(x). $$

Furthermore, $K({\bf x},t|{\bf x}_0,t_0)$ represents the probability amplitude (or simply amplitude) for a quantum system  to go from the point  $({\bf x}_0,t_0)$ to the point
 $({\bf x},t)$ in the configuration space.

 Feynman in his PhdThesis \cite{FEYT} proposed  the path integrals formula for the propagator ({\it Feynman's Principle})

\begin{equation}
K({\bf x},t|{\bf x}_0,t_0)=\int_{\mathcal {C}({\bf x},t|{\bf x}_0,t_0)} \mathcal {D} {\bf x} \exp(\frac i\hbar S [{\bf x}]),
\end{equation}
being
$$ S[{\bf x}]:=S[{\bf x}(\tau)]=\int_{t_0}^t (\sum_{i=1}^n y_i\frac{d x_i}{d\tau}-H( {\bf y},{\bf x},\tau))\,d\tau$$
the Hamilton  functional action along the path ${\bf x}={\bf x}(\tau)$ from $({\bf x}_0,t_0)$ to $({\bf x},t)$,  and the integral

$$ \int_{\mathcal {C}({\bf x},t|{\bf x}_0,t_0)}\mathcal {D} {\bf x}$$
denotes some suitable ``sum" over all possible paths  from $({\bf x}_0,t_0)$ to $({\bf x},t)$. For the Feynman discrete heuristic definition,  see \cite{FEHI}, and for a more mathematically rigorous  point of view, see the more recent papers \cite{morette,morette2,CAMO} and the  book \cite{CAWMO}.

Now the {\it semiclassical expansion} of the propagator around the {\it classical path} $\gamma$ from $({\bf x}_0,t_0)$ to $({\bf x},t)$  is
\begin{equation}
 K({\bf x},t|{\bf x}_0,t_0)=  K_{WKB} (1+O(\hbar)). \label{eq:semiex}
\end{equation}
WKB in the above formula is in honor of G. Wentzel, H.A. Kramers and L. Brillouin, who in 1926 used the WKB (or stationary phase) method to obtain semiclassical approximations of the solution of the Schrödinger equation. This  method to obtain asymptotic formulas for the solution of the Schrödinger equation is exposed in many books of quantum mechanics. For a  modern rigorous survey   see \cite{arn} (appendix 11). The connection between the WKB method for the Schrödinger equation and the Feynman propagators is given in \cite{MAFE}.

We remark that the semiclassical expansion is not only an efficient method for computing  probability amplitudes, but also it is at the base of the Feynman Principle, as it seems clear from Feynman's  Thesis \cite{FEYT}, where Feynman considered as a fundamental motivation of his work  Dirac's ideas about  how Hamilton's principle of least action of classical mechanics  is reflected  in quantum mechanics (\cite{dirac}, section 32). In particular, Dirac pointed out that when   $\hbar$ is considered as ``small",  the relevant region in configuration space for the quantum system must be restricted to a neighborhood of the classical path. In the path integrals approach Feynman translates this as  follows:  when $S\gg\hbar$,  for  paths far  from the classical one the phase oscillates very quickly and there will be cancelations among their global contributions to the amplitude. Thus,  only relevant contributions to the amplitude come from paths near the classical  because, as the classical path is stationary for the action, the action varies very slowly around it (see\cite{FEHI}). We remark that here it is already implicit the possible role  of the variational equation as quantum fluctuations around the classical path, in consonance with  C. DeWitt-Morette's words quoted at the introduction.

The factor  $K_{WKB}$ is given by  Pauli's formula

\begin{equation}
K_{WKB}({\bf x},t|{\bf x}_0,t_0)=\frac 1{(2\pi i\hbar)^{n/2}} \sqrt{| \det M(t,t_0)|} \exp(\frac i\hbar S(\gamma)),\label{eq:wkb2}
\end{equation}
where $S(\gamma):=S[{\bf x}(\tau)]$ is the action on the classical path, extremal of the Hamilton functional action (it depends on the initial and final points of the path), and $\det M(t_0,t)$ is the Van Vleck-Morette determinant, ie, the determinant of the $n$ dimensional square matrix

$$M(t_0,t)=\left(\frac{\partial^2 S(\gamma)}{\partial x_i\partial {x_0}_j}\right).$$
For simplicity we are assuming that there is only a classical path between the points $({\bf x},t)$ and $({\bf x}_0,t_0)$, ie, the classical path $\gamma$, given by ${\bf x}={\bf x}_{cl}(\tau)$, $t_0\leq\tau\leq t$, has no focal  points:  there are no other classical paths (extremals  of the action corresponding to integral curves of the classical Hamiltonian system)  joining the points.  We call  $K_{WKB}$  the {\it semiclassical approximation}  of the propagator  $K$ along the classical path $\gamma$.

 As we will not assume that we know the action in closed form, ie, a suitable solution of the Hamilton-Jacobi equation in a neighborhood of the path $\gamma$,  we are  not  able to compute the Van Vleck-Morette determinant in a direct way.  Fortunately,   an alternative way to compute it by means of the variational equations  is well-known. In fact,  this connection with the variational equations is one of the standard ways to proof Pauli's formula \eqref{eq:wkb2}.

 As it is dynamically interesting, we recall  the idea of the WKB  method for path integrals in an informal way (\cite{lesm}) (for a more rigorous presentation see \cite{morette,CAWMO,CAMO}).

To a classical path $\gamma$, ${\bf x}_{cl}(\tau)$, connecting the points  $({\bf x}_0,t_0)$ and  $({\bf x},t)$, corresponds a  classical integral curve
 \begin{equation}{\bf x}={\bf x_{cl}}(\tau),{\bf y}={\bf y}_{cl}(\tau) \label{clas}
  \end{equation}
  solution of \eqref{ham} associated   to  $\gamma$. By our assumptions about the non-existence of focal points,  there is only one such integral curve $\Gamma_{\R}$.
  Then we can consider the Feynman paths ${\bf x}(\tau)={\bf x}_{cl}(\tau)+\boldsymbol{\xi}(\tau)$ as ``quantum fluctuations" around  the classical solution, with boundary conditions $\boldsymbol{\xi}(t_0)=\boldsymbol{\xi}(t)={\bf 0}$. Expanding the action around the classical solution up to second order in the fluctuations we obtain

 \begin{equation}S[{\bf x}(\tau)]=S[{\bf x}_{cl}(\tau)+\boldsymbol{\xi}(\tau)]=S[{\bf x}_{cl}(\tau)]+\frac 12\int_{t_0}^t (\boldsymbol{\xi} \boldsymbol{\eta})\,  Q  (\boldsymbol{\xi} \boldsymbol{\eta})^t \,d\tau + \text {O}(|\boldsymbol{\xi}|^3), \label{exp}\end{equation}
  where no linear terms appear, because  the classical solution is a stationary point of the action.

  After some manipulations, it is possible to express the quadratic part in \eqref{exp} as the expectation  value  $<\mathcal{O}> $ of the $2n$ dimensional differential  operator of the variational equation \eqref{ve} around the classical solution (``fluctuation operator")

$$\mathcal{O}=\frac d{d\tau}- JH^{\prime\prime} ({\bf x}_{cl}(\tau),{\bf y}_{cl}(\tau)).$$

Now by means of the spectral analysis of the operator $\mathcal{O}$ with boundary conditions $\boldsymbol{\xi}(t_0)=\boldsymbol{\xi}(t)={\bf 0}$ and decomposing the exponential of the action in product of exponentials, the factor coming from the quadratic part is essentially expressed as the square  root of a quotient of (functional) determinant operators:  the determinant of the ``free particle" corresponding operator and the determinant of  operator $\mathcal{O}$. This is similar to the situation for the Gaussian integrals in finite dimension. We recall that the functional determinant of an operator is the  product of its  eigenvalues.

The connections with the initial value problem of the variational equation is given by the so-called Gelfand-Yaglom method (see \cite{dunne} and references therein): the absolute value of the above quotient of determinants can be expressed by  $|t-t_0|^n$ divided by the absolute value of $\det J(t,t_0)$, being $J(\tau,t_0)$ the $n\times n$ block matrix of the fundamental matrix defined  in \eqref{eq:funda}. In this way the solutions of the variational equation along $\gamma$ enters in a  key way in the computation of the  semiclassical approximation.

It is possible then to interpret the matrix $J(t,t_0)$ in terms of the Van Vleck-Morette matrix

\begin{lemma} The Van Vleck-Morette matrix $M(t_0,t)$ and the matrix $J(t,t_0)$ are opposite inverses one of the another:
$$J(t,t_0)M(t_0,t)=-{\bf 1}_n.  $$ \end{lemma}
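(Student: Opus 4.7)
The plan is to combine three standard facts, all of which are already present in the preceding text: the Hamilton--Jacobi identities for the action along a classical path, the identification of the fundamental matrix $\Phi(\tau,t_0)$ with the Jacobian of the Hamiltonian flow with respect to initial conditions, and the inverse function theorem.

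First I would invoke Hamilton--Jacobi along $\gamma$, which yields the well--known pair
\[
\frac{\partial S(\gamma)}{\partial x_i} = y_i(t), \qquad \frac{\partial S(\gamma)}{\partial x_{0,j}} = -y_{0,j}.
\]
Differentiating the second identity with respect to the final coordinate $x_i$ at fixed ${\bf x}_0$ and applying Schwarz's equality of mixed partials, the Van Vleck--Morette entries acquire the dynamical form
\[
M(t_0,t)_{ij} = -\,\frac{\partial y_{0,j}}{\partial x_i}\bigg|_{{\bf x}_0}.
\]
Next I would use the observation made just after the definition of the \emph{VE} in Section 1, namely $\Phi(\tau,t_0)=\partial \boldsymbol{\phi}(\tau,{\bf z}_0)/\partial {\bf z}_0$. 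Reading off the upper--right $n\times n$ block of this Jacobian gives at once $J(t,t_0)_{ij}=\partial x_i(t)/\partial y_{0,j}\big|_{{\bf x}_0}$.

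Under the non--focal--point hypothesis, the map ${\bf y}_0\mapsto {\bf x}(t)$ at fixed ${\bf x}_0,t_0,t$ is a local diffeomorphism, so $J(t,t_0)$ is invertible and the inverse function theorem yields $\partial {\bf y}_0/\partial {\bf x}\big|_{{\bf x}_0}=J(t,t_0)^{-1}$. Substituting this into the expression for $M(t_0,t)$ produces $M(t_0,t)=-J(t,t_0)^{-1}$, which on multiplication by $J(t,t_0)$ is exactly the statement of the lemma.

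There is no deep obstacle here: the result is genuinely three lines of bookkeeping. The one place where care is needed is the matching of matrix--index conventions in Step 1: the $(i,j)$--entry convention for the Hessian $\partial^2 S/\partial x_i\partial x_{0,j}$ must be translated consistently into the entries of $J(t,t_0)^{-1}$, otherwise one recovers the transposed relation $M(t_0,t)^T J(t,t_0)=-{\bf 1}_n$ rather than $J(t,t_0)M(t_0,t)=-{\bf 1}_n$. This transpose/sign accounting, together with the standing non--focal--point assumption (which is what makes $J(t,t_0)$ invertible in the first place), is the only place where a routine mistake could slip in; no further ingredient is needed.
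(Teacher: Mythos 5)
Your argument is correct and is essentially the same as the paper's own sketch: both derive $M(t_0,t)=-\bigl(\partial y_{0}/\partial {\bf x}\bigr)$ from the Hamilton--Jacobi identity for the initial momentum, identify $J(t,t_0)$ as the block $\partial {\bf x}(t)/\partial {\bf y}_0$ of the linearized flow, and invert. Your explicit appeal to the inverse function theorem under the non-focal-point hypothesis, and your remark about the transpose/index convention, merely make precise steps the paper leaves implicit.
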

\begin{proof} (As the proof is dynamically interesting, we sketch it) Looking at the action as a function of the initial an final points, $S=S({\bf x},t; {\bf x}_0,t_0)$,  it is well-known that  the initial momentum can be given by

$$y_{0_i}=-\frac {\partial S }{\partial x_{0_i}},$$
where the derivations are computed at the classical path (see \cite{FEHI} or \cite{dirac}). Hence,

$$M(t_0,t)=-\left(\frac {\partial y_{0_i}}{\partial x_j}\right).$$
 Now taking into account the interpretation of the fundamental matrix of the variational equations as the linear part of the flow of the Hamiltonian system, as well as the structure of such fundamental matrix in \eqref{eq:funda},  it is easy to obtain

$$J(t,t_0)=\left(\frac {\partial x_j}{\partial y_{0_i}}\right)=\left(\frac {\partial y_{0_i}}{\partial x_j}\right)^{-1},$$
and the lemma is proved.
\end{proof}

According to \cite{morette},  this lemma was obtained for the first time by B.S. DeWitt.

From the above  lemma, Pauli's formula \eqref{eq:wkb2} follows.

 As a byproduct it is obtained for the semiclassical approximation
\begin{equation}
K_{WKB}({\bf x},t|{\bf x}_0,t_0)=\frac 1{(2\pi i\hbar)^{n/2}} \frac 1{\sqrt{| \det J(t,t_0)|}} \exp (\frac i\hbar S(\gamma)),\label{eq:wkb}
\end{equation}
 pointing  out the connection  with the solutions of the variational equation.
We notice that for autonomous Hamiltonian systems it is possible to take $t_0=0$.

We remark that as the focal points are given by the zeros of the determinant $\det J(\tau,t_0)$ for $t_0\leq \tau\leq t$ (\cite{lesm}), and since  no focal points are assumed, then the absolute value $|\det J(t,t_0)|$  is equal to either $\det J(t,t_0)$ or $-\det J(t,t_0)$, as a function of $t$. For the symplectic geometry in connection with focal points, see appendix 11 of \cite{arn}.

\medskip

Our problem is: {\it When is it possible to obtain $K_{WKB}$ in closed form ?}

\medskip

Although people know that  for integrable classical systems the answer to the problem is positive, it is not very clear what it means exactly ``to obtain in closed form". As we will see, a precise rigorous answer is given in the framework of Differential Galois Theory of linear differential equations.

\section{Integrability and Differential Galois Theory}

 Now we point out a connection of the semiclassical approximation with a result obtained in a joint work with Ramis around twenty years ago. It is a necessary condition for integrability of complex analytical Hamiltonian systems
given by the Galois group of the variational equation around any particular integral curve (\cite {MR}, see also
\cite{morales}).

Let now \eqref{ham}
be a {\it complex analytical} Hamiltonian system defined over a complex symplectic manifold $M$.
 For simplicity of notation, we denote again by $H$ the Hamiltonian function, and by $ x_i$, $ y_i$,  $\tau$, the local complex symplectic coordinates
 and complex time.

 One says that the Hamiltonian field  $X_H=(\partial H/\partial y_i,-\partial
H/\partial x_i)$ $i=1,...,n$, or the corresponding  Hamiltonian system,  is {\it integrable}  if there are $n$ functions $f_1=H$,
$f_2$,..., $f_n$, such that

\vskip 0.2cm

(1) they are functionally independent ie, the 1-forms $df_i$
$i=1,2,...,n$, are linearly independent over a dense (Zariski) open set
$U\subset  M$, $\bar  U= M$;

\vskip 0.15cm

(2) they form an involutive set, $\{f_i,f_j\}=0$, $i,\,
j=1,2,...,n$.

\vskip 0.2cm

In the above definition we assume that the Hamiltonian is autonomous. For a time depending Hamiltonian,  it is possible to convert it in an autonomous one with $n+1$ degrees of freedom.

We recall that in canonical  coordinates the Poisson bracket   has
the classical expression

$$\{ f,g\} =\sum _{i=1}^n
{\partial f\over \partial y_i} {\partial g\over \partial x_i}-
{\partial f\over \partial x_i} {\partial g\over \partial y_i}.$$

Then by (2) above the functions $f_i$,
$i=1,...,n$ are first integrals of the Hamiltonian field $X_H$. It is very important to
be precise regarding the degree of regularity of these first
integrals. In this note we assume that the first integrals
are meromorphic. Unless otherwise stated, this is the only type of
integrability of Hamiltonian systems that we consider in this note.
Sometimes, to recall this fact people  talk about {\it
meromorphic  integrability}.

 Using the linear first integral
$dH({\bf x}(\tau),{\bf y}(\tau))$ of the variational equation it is possible to reduce
this variational equation by the ``tangential" variational equation  and to obtain the so-called normal
variational equation which, in suitable coordinates, can be
written as a linear Hamiltonian system.
More generally, if, including the Hamiltonian, there are $m$
meromorphic  first integrals {\it independent over $\Gamma$ }and
in {\it involution}, we can reduce the number of degrees of
freedom of the variational equation (\ref{ve}) by $m$ and obtain
the normal variational equation  which, in suitable
coordinates, can be written as a $2(n-m)$-dimensional linear
Hamiltonian system. In other words, the variational equation ``decomposes" in the tangential variational equation and the normal variational one, although, in general, it is not a direct product decomposition. For more details about the reduction to the normal variational equation, see \cite {MR} (or \cite{morales}). A simple example of this reduction is given in the appendix.

Now, in this complex analytical context, time is complex and the particular integral curves become Riemann surfaces $\Gamma$  parameterized by time. Starting with a real Hamiltonian system as \eqref{ham}, they  will be obtained as the  complex analytical continuation of the real integral curves, like  $\Gamma_{\R}$,
considered in the previous section. Of course we need to assume real analytic Hamiltonians, etc.

\begin{theorem} [\cite {MR}]\label{thm:mr1} Assume  a complex analytic
Hamiltonian system is meromorphically completely integrable  in a
neighborhood of the integral curve $\Gamma$. Then the identity
component of the Galois group of the variational equations
(\ref{ve}) and of the normal variational equations  are
abelian groups.
\end{theorem}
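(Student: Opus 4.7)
The plan is to transport the complete-integrability data from \eqref{ham} down to a linear-algebraic statement about the variational equation's solution space $V$, and then read the conclusion off from the symplectic geometry. First I would linearise each first integral along $\Gamma$ by setting
$$g_i(\boldsymbol{\xi},\boldsymbol{\eta}) := df_i({\bf x}(\tau),{\bf y}(\tau)) \cdot (\boldsymbol{\xi},\boldsymbol{\eta}),\qquad i=1,\dots,n,$$
a linear form in $(\boldsymbol{\xi},\boldsymbol{\eta})$ with meromorphic coefficients on $\Gamma$. A short direct computation, using $\{f_i,H\}=0$ to differentiate $df_i$ along $X_H$, shows that $g_i$ is constant along every solution of \eqref{ve}, so it is a linear first integral of the VE over the field $K$ of meromorphic functions on $\Gamma$. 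With the induced symplectic structure on the fibres one has $\{g_i,g_j\}=\{f_i,f_j\}|_\Gamma=0$, so the $g_i$ are pairwise in involution. Provided $\Gamma$ is chosen generically enough that $df_1|_\Gamma,\dots,df_n|_\Gamma$ remain $\C$-linearly independent along it (possible by functional independence of the $f_i$ on $M$), the $g_i$ are $\C$-linearly independent as elements of $V^*$.

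Next I would interpret this Galois-theoretically. Because the flow of \eqref{ham} is symplectic, the fundamental matrix of the VE takes values in $\mathrm{Sp}(2n,\C)$ and therefore $G\subseteq \mathrm{Sp}(V)$. Under the Galois correspondence each rational first integral of the VE corresponds to a $G$-invariant element of $\mathrm{Sym}(V^*)$; in particular each $g_i$ is a $G$-fixed vector in $V^*$. Let $W:=\langle g_1,\dots,g_n\rangle\subseteq V^*$; by the involution computation above $W$ is an $n$-dimensional isotropic, hence Lagrangian, subspace of $V^*$. Its annihilator $W^\perp\subset V$ is then a Lagrangian subspace of $V$, and pointwise $G$-fixation of $W$ is equivalent to $G$ acting as the identity on the quotient $V/W^\perp$.

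Picking any Lagrangian complement $L'$ of $W^\perp$ and writing elements of $\mathrm{Sp}(V)$ in block form with respect to $V=W^\perp\oplus L'$, the combined conditions ``symplectic, preserves $W^\perp$, trivial on $V/W^\perp$'' force
$$g=\begin{pmatrix} {\bf 1}_n & B \\ 0 & {\bf 1}_n \end{pmatrix},\qquad B=B^\top,$$
so $G$ is contained in the unipotent radical of the Siegel parabolic, which is the abelian additive group of symmetric $n\times n$ matrices. Hence $G^0$ is abelian. The statement for the normal variational equation follows because it is obtained from the VE by a symplectic quotient along the very first integrals $g_i$; its Galois group is a quotient of $G$ and so also has abelian identity component. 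The real delicacy lies in the first step, where one must check that functional independence and Poisson involution of the $f_i$ both genuinely propagate to the $g_i$ on the abstract $\C$-vector space of solutions; once that dictionary is secure, the rest is structural linear symplectic algebra.
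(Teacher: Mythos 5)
There is a genuine gap, and it sits exactly where you locate the ``real delicacy'': the passage from the $f_i$ to linear first integrals $g_i$ of the variational equation does not work in general. The differential $df_i$ restricted to $\Gamma$ may vanish identically, in which case your $g_i$ is the zero form. This is not a degenerate nuisance but the typical situation in applications: one usually takes $\Gamma$ inside an invariant submanifold (e.g.\ $x_2=y_2=0$ for $H=\tfrac12(y_1^2+y_2^2)+\tfrac12(x_1^2+\omega^2x_2^2)$ with $f_2=\tfrac12(y_2^2+\omega^2x_2^2)$), and there $df_2|_\Gamma\equiv 0$. Your escape clause ``provided $\Gamma$ is chosen generically enough'' is not available, because the theorem asserts the conclusion for the given curve $\Gamma$, and the functional independence of the $f_i$ is only on a dense open set that need not meet $\Gamma$ at all. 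The correct substitute is Ziglin's lemma (the starting point of the proof in \cite{MR}): after modifying the $f_i$ one can take the \emph{junior part} of each, i.e.\ the lowest-order nonvanishing homogeneous term of its Taylor expansion along $\Gamma$, which yields $n$ functionally independent \emph{homogeneous polynomial} first integrals of the VE, in involution, but of arbitrary degrees, not degree one. Once the invariants are quadratic or higher, your Siegel-parabolic block computation no longer applies, and in fact your conclusion that $G$ lies in the abelian unipotent group $\bigl\{\bigl(\begin{smallmatrix}{\bf 1}_n&B\\0&{\bf 1}_n\end{smallmatrix}\bigr):B=B^\top\bigr\}$ is false in general: in the uncoupled-oscillator example the Galois group of the normal VE is a torus in $\mathrm{SL}(2,\C)$ for generic $\omega$ --- abelian but not unipotent. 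This is consistent with the theorem claiming only that $G^0$ is abelian, not that $G$ is abelian and unipotent; an argument that proves the stronger statement must be wrong.

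The part of your argument that is sound is the reduction to a statement about a connected algebraic subgroup $G\subseteq\mathrm{Sp}(2n,\C)$ possessing $n$ independent rational invariants in involution (the dictionary ``first integrals of the VE over $K$ $\leftrightarrow$ $G$-invariants on the solution space'' is correct, and the symplecticity of $G$ is correct). But the heart of the Morales--Ramis proof is precisely the Lie-algebraic theorem that such a group has abelian Lie algebra; it is proved by showing that the Hamiltonian vector fields of the invariants restrict to an integrable system on the (co)adjoint orbits of $\mathrm{Lie}(G)$ and forcing those orbits to be zero-dimensional. That step cannot be replaced by the linear symplectic algebra of fixed vectors in $V^*$. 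Note also that the paper you are reading does not prove Theorem~\ref{thm:mr1}; it quotes it from \cite{MR}, so the comparison here is with the argument in that reference.
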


This theorem has its antecedents in a theorem of Ziglin in 1982 about the structure of the monodromy group of the variational equations of Hamiltonian systems in presence of first  integrals and Ziglin himself   applied his theorem to the study of the non-integrability of very interesting problems \cite{zi1,zi2}. We remark that the monodromy group is always contained in the Galois group and for linear differential equations with only regular singular points all the information of the Galois group is already contained in the monodromy group: see the appendix.

Theorem \ref{thm:mr1} is a typical version of several possible
theorems. In some cases it is interesting to add to the manifold
$M$ some points at infinity; thus we suppose that we are in the
following situation: $M$ is an open subset of a complex manifold
$\overline M$, $\overline M\setminus M$ is a hypersurface (which
is by definition the hypersurface at infinity), the two-form
$\omega$ on $M$ defining the symplectic structure extends
meromorphically on $\overline M$ and the vector field $X_H$
extends meromorphically on $\overline M$. In such a case, when
(\ref{ve}) has irregular singular points at infinity, we only
obtain obstructions to the existence of first integrals which are
meromorphic along $\overline \Gamma$, ie,  also at the points at
infinity of $\Gamma$; for example,  for rational first integrals
when $\overline M$ is a projective manifold. From a dynamical
point of view, the singular points of the variational equation
\eqref{ve}, $\overline\Gamma\setminus\Gamma$, correspond to equilibrium
points, meromorphic singularities of the Hamiltonian field or
points at  infinity. Then in the above theorem, the coefficient field
is the field $\mathcal{M}(\overline\Gamma)$ of meromorphic functions over $\overline\Gamma$.

In \cite{morales} it was conjectured that  theorem \ref{thm:mr1} can be extended to higher order variational equations, and the conjecture has been  proved in  a joint theorem with Ramis and Simó in \cite{MRS}. These results have been extended by Ayoul and Zung to non Hamiltonian systems by means of the pullback to the cotangent lift, ie, converting the dynamical system into a Hamiltonian one and applying to it our result \cite{AZ}. We do not consider these extensions in this note.

We remark that theorem \ref{thm:mr1}  is a necessary condition for integrability, not a sufficient one. In other words,  for some non-integrable systems the identity component of the Galois group of the variational equation could be abelian (and, in particular,  the variational equation to be integrable in the Picard-Vessiot sense: see the appendix), despite of the fact that the original  Hamiltonian system is non-integrable, although this is not a generic behavior.

In the last twenty years, the above results have been applied to obtain the non integrability of a considerable amount of dynamical systems, see the surveys \cite{MORAM,moralessurv}. These theorems are also relevant to  obtain  the integrability of some families of dynamical systems, by applying  first them  in order to discard  the non-integrable ones and then trying to prove the integrability of the remaining systems  by some suitable ``direct method".

\section{Application to the semiclassical approximation}

Our first goal is to give a  precise statement about  what  we understand by ``closed form semiclassical approximation" and to apply the Differential Galois Theory to its computation. We remain here in the local situation of fluctuations around a concrete classical real integral curve $\Gamma_{\R}$, corresponding to the classical path $\gamma$.

Let $\Gamma$ be the Riemann surface given  by the complexification  of the real classical integral curve $\Gamma_{\R}$ defined by  the classical path $\gamma$ around which we compute the semiclassical approximation $K_{WKB}$. Let  $\overline\Gamma$ be its completion by adding some suitable points, as in theorem \ref{thm:mr1}. Now we consider the semiclassical approximation as a function of $t$, ie, we fix the initial point of $\gamma$, but not the final point, keeping the assumption that no focal points arise along the path: it is always  possible to satisfy this condition by taking $t-t_0$ small enough.

\begin{definition}  The semiclassical approximation $K_{WKB}=K_{WKB}(t)$  of the Feynman propagator  is  {\it solvable in closed form around the real integral curve $\Gamma_{\R}$} if it is obtained by a Liouvillian function over the field of meromorphic functions, $\mathcal{M}(\overline\Gamma)$,  over $\overline\Gamma$, being $\Gamma$ the complexified integral curve of the integral curve $\Gamma_{\R}$ defined by the classical path $\gamma$.
\end{definition}

A Liouvillian function over a differential  field $K$ is a function in a Liouvillian extension of $K$: see the appendix. We hope that the use of the same notation for the propagator and for the differential field $K$ will not create confusion: the  distinction will become clear from the context. 

I point out that a similar definition has been given for the first time in \cite{ACSU} for the computation  of propagators of some time-dependent harmonic oscillators.

We remark that the above definition is local around the particular  classical path $\gamma$,  we do not assume that the semiclassical approximation is solvable in closed form around other classical paths: we only  study the quantum fluctuations along  this $\gamma$. For this reason we based our analysis  on the formula \eqref{eq:wkb} and not on the original Pauli's formula \eqref{eq:wkb2}, where it is implicitly assumed  a closed form solution of the Hamilton-Jacobi equation in order to obtain a closed form formula of the Van Vleck-Morette determinant from its own definition: as was remarked in the introduction, even for completely integrable Hamiltonian systems it is not simple to obtain a complete integral  of the Hamilton-Jacobi equation in closed form.  The situation is similar to path integrals  in quantum field theory, where we study  quantum fluctuations for nice classical particular solutions,  like solitons or instantons.

The main aim of this note is the following remark that we state as a theorem for future references:

\begin{theorem} \label{main} Assume that the complexified  Hamiltonian system is meromorphically completely integrable  in a
neighborhood of the integral curve $\Gamma$. Then the semiclassical approximation of the propagator $K_{WKB}$ around $\gamma$ is solvable in closed form.
\end{theorem}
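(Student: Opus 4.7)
The plan is to start from formula \eqref{eq:wkb} and verify, factor by factor, that each ingredient of
$$K_{WKB}(t)=\frac{1}{(2\pi i\hbar)^{n/2}}\,\frac{1}{\sqrt{|\det J(t,t_0)|}}\,\exp\Big(\frac{i}{\hbar}S(\gamma)\Big)$$
lies in some Liouvillian extension of $\mathcal{M}(\overline\Gamma)$. Since Liouvillian extensions are closed under sums, products, algebraic operations, primitives, and exponentials of primitives, it will then follow that $K_{WKB}$ itself is Liouvillian over $\mathcal{M}(\overline\Gamma)$, which is exactly what the definition of \emph{solvable in closed form} requires. The constant prefactor $(2\pi i\hbar)^{-n/2}$ is trivial.

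For the phase factor I would use the definition of the Hamilton action: along the classical integral curve,
$$S(\gamma)(t)=\int_{t_0}^{t}\Big(\sum_{i=1}^n y_i(\tau)\dot x_i(\tau)-H({\bf x}(\tau),{\bf y}(\tau),\tau)\Big)\,d\tau$$
is a primitive of a meromorphic function on $\overline\Gamma$ (the Hamiltonian extending meromorphically to the completion by the standing hypothesis), hence lies in a Liouvillian extension of $\mathcal{M}(\overline\Gamma)$, and its exponential then lies in a further Liouvillian extension.

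For the fluctuation factor I would invoke Theorem~\ref{thm:mr1}. Complete meromorphic integrability of the complexified Hamiltonian system in a neighborhood of $\Gamma$ implies that the identity component of the Galois group of the variational equation \eqref{ve} over $\mathcal{M}(\overline\Gamma)$ is abelian, and in particular solvable. By the Picard--Vessiot criterion recalled in the appendix, this is equivalent to integrability of the variational equation in the Liouvillian sense, so every entry of the fundamental matrix $\Phi(\tau,t_0)$, and hence every entry of the block $J(\tau,t_0)$ in \eqref{eq:funda}, lies in a Liouvillian extension of $\mathcal{M}(\overline\Gamma)$. Consequently $\det J(t,t_0)$ is Liouvillian, and $|\det J(t,t_0)|^{1/2}$, which as noted just before the statement of the theorem coincides up to a locally constant sign with $(\det J(t,t_0))^{1/2}$, is obtained by one further algebraic step.

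The main obstacle is really one of bookkeeping rather than of mathematical content: one must ensure that the base differential field to which Theorem~\ref{thm:mr1} is applied is exactly the same $\mathcal{M}(\overline\Gamma)$ appearing in the definition of \emph{solvable in closed form}, so that the abelian identity component forces Liouvillian solutions over the very field in which closedness of the WKB factor is being measured. Once the base field is fixed, the proof is a direct chaining of Theorem~\ref{thm:mr1}, the Picard--Vessiot dictionary between solvable identity component and Liouvillian solvability, and the elementary Liouvillian operations visible in \eqref{eq:wkb}.
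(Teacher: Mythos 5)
Your proposal is correct and follows essentially the same route as the paper's own proof: apply Theorem~\ref{thm:mr1} to conclude that the identity component of the Galois group of the variational equation is abelian, hence solvable, so the Picard--Vessiot extension containing the entries of $J(t,t_0)$ is Liouvillian over $K=\mathcal{M}(\overline\Gamma)$, while the exponential of the action is Liouvillian because the Lagrangian along $\gamma$ lies in $K$. Your additional remarks on the square root, the sign of $\det J$, and the consistency of the base field only make explicit steps the paper leaves implicit.
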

\begin{proof} It is a direct consequence of the application of  theorem \ref{thm:mr1} to formula \eqref{eq:wkb}. The Lagrangian over the classical path is a function in the differential field $K=\mathcal{M}(\overline\Gamma)$, hence the exponential factor in  \eqref{eq:wkb} is Liouvillian over $K$. Furthermore the determinant $\det J(t,t_0)$ belongs to the Picard-Vessiot extension of the variational equation over $\Gamma$ and,  as by theorem \ref{thm:mr1} the identity group of the Galois group of the variational equation is abelian and, in particular, solvable, the Picard-Vessiot extension is a Liouville extension of the differential field $K$ (see the appendix) and the result follows.
\end{proof}

In our opinion theorem \ref{main}  explains in a very precise way the reason of the success in  obtaining  closed form formulas in   semiclassical approximations to quantum mechanics path integrals for classical integrable Hamiltonians.

\begin{example}({\it One degree of freedom autonomous Hamiltonians}) As a simple example we consider,  the 1-degree of freedom natural Hamiltonian

\begin{equation}H=\frac 1{2m}y^2 +V(x). \label{eq:one}\end{equation}
We know that the  system is completely integrable.
Then in the appendix,  the space of solutions of the variational equation around any integral curve is computed in closed form. We assume  that the complex integral curve $x=x_{cl}(\tau)$, $y=y_{cl}(\tau)$ is defined by a classical path  $\gamma$. To obtain the semiclassical approximation from the formula \eqref{eq:wkb}, we only need to compute the classical action $S(\gamma)$ integrating the Lagrangian over the classical path,  and to obtain the function $J(t,0)$ (as the system is autonomous we take $t_0=0$). Now $J(t,0)$ is the $\xi(t)$ ``position" solution of the variational equation with initial conditions $\xi(0)=0$, $\eta(0)=1$. Looking at the formula \eqref{solevi} of the general solution, it is not difficult to obtain

\begin{equation} J(t,0)=\frac 1 m y_{cl}(0)y_{cl}(t)\int_0^t \frac {d\tau}{y_{cl}^2(\tau)}.\label{eq:jota}\end{equation}

Hence, the semiclassical approximation is given in closed form. Thus,  as a Liouvillian function over the field of meromorphic functions on the Riemann surface defined by the classical solution:

\begin{equation}K_{WKB}( x,t| x_0,0)=\frac 1{\sqrt{(2\pi i\hbar/m)}} \frac 1{\sqrt{|  y_{cl}(0)y_{cl}(t)\int_0^t \frac {d\tau}{y_{cl}^2(\tau)}|}}
 \exp (\frac i\hbar S(\gamma)).\label{eq:known}\end{equation}
We can identify  the focal points as the points with zero momentum $y(\tau)=0$ (turning points: points where the kinetic energy vanishes) and we are assuming that no such points exist along the interval $[0,t]$.

As a concrete example, the reader can check that for the harmonic oscillator with Hamiltonian

$$H=\frac 1m y^2+\frac 12 m\omega^2 x^2,$$
then

$$y_{cl}(\tau)=\frac {m\omega}{\sin \omega t}(x-x_0\cos\omega t)\cos \omega\tau-m\omega x_0 \sin\omega\tau, $$
and
$$ y_{cl} (0)y_{cl}(t)=\frac{m^2 \omega^2}{\sin^2 \omega t}(x-x_0\cos\omega t)(x\cos \omega t -x_0),$$
\begin{equation}\int_0^t\frac {d\tau}{y_{cl}^2(\tau)}=\frac{\sin^3\omega t}{m^2\omega^3(x^2\cos \omega t-xx_0-xx_0\cos^2\omega t+x_0^2\cos\omega t)}.\label{eq:intos}\end{equation}
Thus,
obtaining the well-known expression of the Feynman propagator for the harmonic oscillator: as the Hamiltonian is quadratic, the semiclassical approximation is exact.

 We remark, that for the harmonic oscillator with constant coefficients variational equation,  formula \eqref{eq:jota} is not the best way to obtain the function $J$, because we know directly the general solution of the variational equation with no  need of computing  any quadrature. Furthermore,    as the integral \eqref{eq:intos} is elementary (equivalently, the general solution of the variational equation is given by elementary functions),  the Galois group of the variational equation reduces to the identity.

  Less elementary examples are the cubic and quartic oscillators. It would be interesting to compare our presentation here for \eqref{eq:one} with other references, like \cite{KLECHE}, although in this reference  turning points are also considered and we avoided  them in this note.
  
  We also thing that it will not be difficult  to extend our analysis to non-autonomous harmonic oscillators. The  simplest of them is the forced oscillator considered by Feynman is his thesis
  
  \begin{equation}H=\frac 1{2m} y^2+\frac 12 m\omega^2x^2-\gamma(t) x. \label{feyosc}\end{equation}
Without going  into details, we remark that with some slight modifications it is possible to apply our Galoisian  approach to compute the amplitude of \eqref{feyosc}. As it has been  said in the introduction, more complicated oscillators are studied in \cite{ACSU,ACSU2}.

\end{example}

\medskip

As a final remark, taking into account the ``dictionary" between Feynman path integrals and the Wiener ones, the approach of this note could also be applied to Wiener classical path integrals for  stochastic processes (see for instance, \cite{CHADE}).

\subsection*{Acknowledgements}
Thanks to Jean-Pierre Ramis and Carles Simó for reading a preliminary version of this note, as well as for interesting discussions  and  encouraging me to publish it. Juan Miguel Nieto pointed me the relevance of the Gelfand-Yaglom method for the computation of  functional determinants. I thanks also to people of our UPM Integrability Seminar in Madrid, where I exposed for the first time the main ideas  of this note and  to Álvaro Pérez Raposo for correcting some typos.


\section*{Appendix. Differential Galois Theory}
\label{sec:DGT}

As it is not in the standard curriculum of the mathematical physics researchers, for completeness I include here the  minimum of concepts and results of Differential Galois Theory of linear differential equations, necessary to this  note.

The Differential Galois Theory of linear ordinary differential equations is also called the Picard-Vessiot theory, because it was discovered by Picard  at the end of the XIX century and  with relevant contributions by Vessiot, a  Picard's student , some years later. It was formalized by Kolchin in the middle of the XX century. Two standard monographs about it are \cite{crha,vasi}, and for an analytic elementary introduction I recommend \cite{sauloy}.   As complementary references, see also  Martinet and Ramis's nice presentation \cite{MARA}   and the second chapter of the book  \cite{morales}. We will assume that we are in the complex analytical situation: the coordinates are over a complex analytical manifold, etc.

Formally a differential field $K$ is a field with a derivative (or
derivation) $\partial=\,^{\prime}$, ie, an additive mapping
satisfying the Leibniz rule. From now on we will assume that
$K=\M(\overline{\Gamma})$, the meromorphic functions over a
connected Riemann surface $\overline{\Gamma}$. The reason for this
notation is that  $\overline\Gamma\setminus\Gamma$ will be the
set of singular points of the linear differential equation, ie,
poles of the coefficients with $\frac{d}{dx}$ as derivation, $x$
being a local coordinate over the Riemann surface $\Gamma$. A
particular classical case is when $K={\bf C}(x)=\M(\bf P^1)$ is
the field of rational functions, ie, the field of meromorphic
functions over the  Riemann sphere $\bf P^1$. Another interesting
example in the applications is that of  meromorphic functions on a genus one Riemann surface, ie,
a field of elliptic functions.

We can define differential subfields and differential extensions
in a direct way by requiring that  inclusions  commute with the
derivation. Analogously, a differential automorphism in $K$ is an
automorphism commuting with the derivative.

Let

\begin{equation}{\bf y}^{\prime}=A{\bf y},\quad A=A(x)\in Mat(m,K)\label{1}
\end{equation}
be a system of linear differential equations. We  now proceed  to associate
with  (\ref{1}) the so-called Picard-Vessiot extension of $K$. The
Picard-Vessiot extension $L$ of (\ref{1}) is an extension of $K$,
such that if $ {\bf \phi}_1,...,{\bf \phi}_m $ is a ``fundamental" system of
solutions of the equation (\ref{1}) (ie, linearly independent
over $\bf {C}$), then $L=K(\boldsymbol{ \phi}_{ij})$ (rational functions in $K$
in the coefficients of the ``fundamental" matrix $\Phi=\boldsymbol({\phi}_1\cdots\
\boldsymbol{\phi}_m)$). This is the extension of $K$ generated by $K$ together
with the $m^2$ elements ${\phi}_{ij}$ of the fundamental matrix. We observe that $L$ is a differential field (by
(\ref{1})). The existence and unicity  of the Picard-Vessiot
extension was proven by Kolchin, assuming that the field of constants (in our case ${\bf C}$) is of characteristic zero
and algebraically closed: this is one of the reasons to work in the complex analytic situation.

As in  Classical Galois Theory of algebraic equations, we define
the Galois group of (\ref{1}), $G:=Gal(L/K)$, as the
group of all the (differential) automorphisms of $L$ leaving the
elements of $K$ fixed. Then one of the main results of the theory
is that the Galois group of (\ref{1}) is faithfully represented as
an {\it algebraic} linear group over $\bf {C}$, the
representation being  given by the action $\sigma\in G$,
\begin{equation}\sigma (\Phi)=\Phi B_\sigma,\label{gare}\end{equation}
$B_\sigma \in GL(m,\bf {C})$. We recall that a linear
algebraic group is a linear group that is an algebraic variety and the structures of group
and algebraic varieties are compatible, ie, the group multiplication  and the inversion transform
are morphisms of algebraic varieties.

Furthermore, by a classical theorem credited to Schlesinger, the
relation between the monodromy and the Galois group is as follows.
Let $\overline\Gamma\setminus\Gamma$ be the set of
 singular points of the equation ie, the
poles of the coefficients on $\overline\Gamma$.
 We recall that the monodromy group of the equation is the subgroup of the
linear group defined as the image of a representation of the
fundamental group $\pi_1(\Gamma)$ into  the linear group
$GL(m,\bf {C})$. This representation is obtained by  analytical
continuation of the solutions along the elements of
$\pi_1(\Gamma)$. The monodromy group $M$ is contained in the
Galois group $G$ and,  if the equation is Fuchsian (ie, it only  has
regular singular points), then $M$ is Zariski dense in $G$,
see for instance \cite{vasi} (pp. 148) or for a more complete presentation
 \cite{sauloy}(chapters 16 and 17). In particular, this implies that for
Fuchsian differential equations the Galois group is solvable or
abelian, if, and only if, the monodromy group has the same properties,
 respectively. In the general case, Ramis
found a generalization of the above and, for example, he showed
that the Stokes matrices associated to an irregular singularity
belong to the  Galois group, see \cite {MARA}.

We call an extension of differential fields $K\subset L$  a {\it Liouville (or Liouvillian)  extension} over $K$ if
 there exists a chain of differential extensions
$K_1:=K\subset K_2\subset\cdots \subset K_r:=L$, where each
extension is given by the adjunction of one element $a$,
$K_i\subset K_{i+1} =K_i(a)$, such
that $a$ satisfies one of the following conditions:

(i) $a^{\prime}\in K_i$,

(ii) $a^{\prime}=b a$, $b\in K_i$,

(iii) $a$ is algebraic over $K_i$.

\noindent A Liouvillian function over $K$ is a function that belongs to a  Liouville extension $L$ of $K$.

 Then, it can be proven that  {\it the Picard-Vessiot extension of a linear differential equation is a Liouville extension  if, and only if, the identity component $G^0$ of its
Galois group  is a solvable group}. In particular, if $G^0$ is
abelian, then the Picard-Vessiot extension is a Liouville  one.

We remark that we also call  a linear differential equation
{\it integrable} (in the sense of Picard-Vessiot) if the associated Picard-Vessiot extension is Liouvillian.  We notice that this is a very precise integrability statement. We do not enter here  in the interesting problem of the dynamical behavior of the solutions of the linear equation in the  non integrable case, but some preliminary works (see chapter 7 of \cite{morales}), as well as some heuristic ideas,  suggest that  the dynamical behavior  defined by the monodromy and other dynamically relevant elements of the Galois group become quite complicated, even chaotic in some sense.

We remark that if the equation \eqref{1} has some linear structure, the Galois group preserves it. For example, if it is symplectic,  ie, $A=JS(x)$ with $S$ symmetric, then the Galois group is contained in the symplectic group (for a proof see \cite{morales}). This is connected with the properties of the gauge transformations of system \eqref{1}. In fact, by a theorem of Kolchin and Kovacic it is possible to reduce the linear equation up to essentially the Galois group: for connected Galois groups  the coefficient matrix of the transformed system belongs to the Lie algebra of the Galois group over  $K$. Hence, it is a powerful method to reduce and for integrable systems to solve them completely: the transformed system becomes reduced to triangular form for a suitable gauge transformation (the solvable Lie algebras are triangular).

We  only consider here  gauge transformations with
coefficients that remain in the differential field of coefficients $K$. Then a gauge transform of \eqref{1} is a linear change of the dependent variables
  $P(x)\in GL(n,K)$,
  $${\bf y}=P(x){\bf z}.$$
  Furthermore, if the linear equation has more structure, it is natural to consider gauge transformations that preserve this structure. For example, if the equation is symplectic,  symplectic gauge transformation $P$ over $K$ are the natural gauge transforms to be considered.
The transformed equation

\begin{equation} {\bf z}^\prime=P[A](x){\bf z},\, P[A]=P^{-1}AP-P^{-1}P^\prime.\label{3}\end{equation}
Then, by construction,  {\it the Galois group is invariant by the gauge transformation}, ie, as the Picard-Vessiot extensions are the same,  the Galois groups of \eqref{1} and \eqref {3} are also the same.
As a particular example, we can interpret as a gauge transform the  d'Alembert classical reduction of  order, when a particular solution is known:  take  the particular solution as  one of the columns of $P$.

\begin{example} \label{apex} We illustrate the above ideas with an application to an elementary example, but  with some relevance in  path integrals. It is well-known that the variational equation of a 1-degree of freedom Hamiltonian system is solved in closed form, but we would like to look at this  from the point of view of the Picard-Vessiot theory. We remark that  what follows is a very particular simple case of the method of reduction to the normal variational equations (see \cite{MR}, and also \cite{morales}, pp. 75-77). For simplicity we  assume that the Hamiltonian is natural

$$ H=\frac 1{2m}y^2 +V(x).$$
Then the variational equation around the solution $x=x(\tau)$, $y=y(\tau)$ is given by
\begin{equation}
\frac d{d\tau}\begin{pmatrix}\xi\\ \eta\end{pmatrix}=\begin{pmatrix}0&\frac 1m\\-V^{\prime\prime}&0\end{pmatrix}\begin{pmatrix}\xi\\\eta\end{pmatrix}, \label{evi}
\end{equation}
denoting $V^{\prime\prime}=V^{\prime\prime}(x(\tau))$. A particular solution of \eqref{evi} is

$$\begin{pmatrix}\frac ym\\-V^\prime\end{pmatrix}, $$
being $y=y(\tau)$ and $V^\prime=V^\prime(x(\tau))$. It seems clear that the  coordinates of this solution belong to $K$, the meromorphic functions over the Riemann surface defined by the particular solution, because we assume $V$ analytical in some domain, etc. Hence we consider the simple
symplectic gauge transformation taking as the last column the above particular solution

$$P=\begin{pmatrix}0&\frac ym\\-\frac my&-V^\prime  \end{pmatrix}\in SL(2,K)$$
(this is not the only possible choice for P, but it is a simple symplectic one).

Then the matrix of the transformed system is triangular
$$P[A]=P^{-1}AP-P^{-1}\dot P=\begin{pmatrix} 0&0\\ -\frac m{y^2}&0\end{pmatrix}.$$
This is natural by the Kolchin-Kovacic theorem, because
by theorem \ref{thm:mr1} we knew a priori  that the identity component of the Galois group is abelian.
The general solution of the transformed system is

$$ \begin{pmatrix} c_1 \\ -c_1m\int \frac {d\tau}{y^2}+c_2\end{pmatrix},$$
with $c_1$, $c_2$ integration constants. As the fundamental matrix of this system is
$$ \begin{pmatrix} 1&0\\ -m\int \frac {d\tau}{y^2}&1\end{pmatrix}, $$
the Picard-Vessiot extension is given by
\begin{equation} K\subset K(\int \frac {d\tau}{y^2})=L.\label{eq:pvex} \end{equation}
 The Galois group is represented as an algebraic subgroup of the additive group

$$B_{\sigma}=\begin{pmatrix} 1&0\\ \alpha &1\end{pmatrix}, $$
with $\alpha\in{\C}$, coming from its action on the integral, ie,
$$ \sigma(\int \frac {d\tau}{y^2})=\int \frac {d\tau}{y^2}+\alpha$$
(see formula \eqref{gare}).
In fact, only two cases are possible, either the integral $\int \frac {d\tau}{y^2}$ belongs to $K$ or either it does not, in the first case the group reduce to the identity as follows from  the definition of the Galois group,  and in the second case the Galois group is the complete additive group.

Coming back to the initial system \eqref{evi},  its general solution is

\begin{equation}\begin{pmatrix}\xi\\ \eta\end{pmatrix}=\begin{pmatrix}y\int \frac {d\tau}{y^2}&\frac ym\\-\frac my-mV^\prime \int \frac {d\tau}{y^2}&-V^\prime\end{pmatrix}\begin{pmatrix}c_1\\ c_2\end{pmatrix}.  \label{solevi}\end{equation}
We observe that the fundamental matrix in \eqref{solevi} is symplectic with coefficients in  the Picard-Vessiot extension \eqref{eq:pvex}.
\end{example}


\begin{thebibliography}{99}
	\footnotesize\itemsep=0pt

\bibitem{DAPR} {P. Acosta-Humánez, D. Blázquez-Sanz}, {Non-Integrability of some
hamiltonian systems with rational potential}, \emph{Disc.Cont. Dyn. Sys. Series B}  \textbf{10} (2008) 265--293.


\bibitem{ACSU}
Primitivo Acosta-Humánez, Erwin Suazo, {\it J. Phys. A: Math. Theor.} {\bf 46} (2013) 455203 (17pp).

\bibitem{ACSU2}
Primitivo Acosta-Humánez, Erwin Suazo, Liouvillian Propagator and Degenerate Parametric Amplification with Time Dependent Pump Amplitude and Phase, in {\it Analysis, Modelling, Optimization and Numerical Techniques}, G. Olivar Tost, O. Vasiieva ed., Springer, Heidelberg, 2015, pp. 295-307.

\bibitem{arn}
V.I. Arnold, {\it Mathematical Methods in Classical Mechanics}, Springer-Verlag, Berlin, 1987.

\bibitem{AZ}
M. Ayoul, N.T. Zung, Galoisian obstructions to non-Hamiltonian integrability, {\it C. R. Math. Acad. Sci. Paris} {\bf 348} (2010) 1323-1326.



\bibitem{CAWMO}
P. Cartier, C. DeWitt-Morette,  Functional Integration, J. Math. Phys. {\bf 41} (2000) 4154-4187.

\bibitem{CAMO}
P. Cartier, C. DeWitt-Morette, {\it Functional Integration:
Action and Symmetries}, Cambridge Univ.  Press,  New York, 2006.

\bibitem{CHADE}
M. Chaichian, A. Demichev,{\it Path Integrals in Physics, Vol. I: Stochastic Processes and Quantum Mechanics}, Institut of Physics Publishing, Bristol, 2001.

\bibitem{crha} Crespo, T., Hajto, Z., {\it Algebraic Groups and Differential Galois Theory},  American Mathematical Society, Providence, Rhode Island, 2011.

\bibitem{morette2}
C. DeWitt-Morette, Feynman's Path Integral Definition Without Limiting Procedure,{\it Commun. Math. Phys.} {\bf 28} (1972) 47-67.

\bibitem{morette3}
C. Dewitt-Morette, Catastrophes in Lagrangian Systems, in {\it Long-Time Predictions in Dynamics},  V. SzebehelyB. D. Tapley ed. (Dordrech, D, Reidel, 1976), pp. 57-65.

\bibitem{morette}
C. DeWitt-Morette,  The Semi-Classical Expansion, {\it  Ann. Phys.} {\bf 97}(1977) 367-399 and Erratum,  {\it  Ann. Phys.} {\bf 101} (1976)
682-683.

\bibitem{dirac}
P.A.M. Dirac, {\it The Principles of Quantum Mechanics}, 4th Edition, Oxford Univ. Press, Oxford, 1958.


\bibitem{dunne}

Gerald V. Dunne, Functional determinants in quantum field theory, {\it  J.Phys.} {\bf A41} (2008) 304006.

\bibitem{FEYT}

Richard P. Feynman, The Principle of Least Action in Quantum Mechanics, PhdThesis , Princeton, 1942. In {\it Feynman Thesis, A new Approach yo Quantum Theory}, Laurie M. Brown Ed., World Scientific, Singapore, 2005.


\bibitem{FEHI}
Richard P. Feynman, Albert R. Hibbs, {\it Quantum Mechanics and Path Integrals}, McGraw-Hill, New York, 1965. {\bf 108} (1977) 165-197.

\bibitem{GROSTE}

C. Grosche. F. Steiner, {\it Handbook of Feynman Path Integrals}, Springer-Verlag, Berlin, 1998.

\bibitem{KLEI}

Hagen Kleinert, {\it  Path Integrals in Quantum Mechanics, Statistics, Polymer Physics, and Financial Markets}, 5th Edition,  World Scientific, Singapore, 2009.


\bibitem{KLECHE}
H. Kleinert, A. Chervyakov, Simple explicit formulas for Gaussian path integrals with time-dependent frequencies, {\it Physics Letters } A {\bf 245} (1998) 345-357.

\bibitem{lesm}
S. Levit, U.Smilansky, The Hamiltonian Path Integrals and the Uniform Semiclassical Approximations of the Propagator, {\it Annals of Physics}{\bf 108} (1977) 165-197.


\bibitem{MARA} J. Martinet, J.P. Ramis, {Théorie de Galois differentielle et
resommation}, \emph {Computer Algebra and Differential Equations}, E.
Tournier Ed. , Academic Press, London (1989) 117--214.

\bibitem{MAFE} P. V. Maslov, M.V. Fedoriuk, {\it Semi-classical approximations in quantum mechanics}, Reidel Pub. Company, Dortrech, 1981.

\bibitem{morales} {Juan J. Morales-Ruiz}, {\it  Differential Galois Theory and Non-integrability of Hamiltonian Systems},
 Modern  Birkhäuser Classics, Reprint 2013 of the 1999 Edition by Birkhäuser, Springer, Basel.

\bibitem{moralessurv} Juan J. Morales-Ruiz, Picard-Vessiot theory and integrability, Journal of Geometry and  Physics 87 (2015) 314-343.

\bibitem{MR} Juan J. Morales-Ruiz, Jean-Pierre Ramis, Galoisian obstructions to
integrability of Hamiltonian systems, {\it Methods and
Applications of Analysis} {\bf 8}  (2001) 33-96.

\bibitem{MORA3}
Juan J. Morales-Ruiz, Jean-Pierre Ramis, Galoisian Obstructions to
integrability of Hamiltonian Systems II, {\it Methods and
Applications of Analysis}  {\bf  8} (2001) 97-102.


\bibitem{MORA2}

Juan J. Morales-Ruiz, Jean-Pierre Ramis, A Note on the Non-Integrability of
some Hamiltonian Systems with a Homogeneous Potential, {\it
Methods and Applications of Analysis}  {\bf  8} (2001) 113-120.

\bibitem{MORAM}{Juan J. Morales-Ruiz, Jean-Pierre Ramis},  {\it Integrability of Dynamical Systems through Differential
 Galois Theory: a practical guide}, Differential algebra, complex analysis and orthogonal polynomials 143-220,
 P.-B. Acosta-Humánez, F. Marcellán Ed.,  {\it Contemp. Math.}, {\bf 509}, Amer. Math. Soc., Providence, RI, 2010.


\bibitem{MRS}
Juan J. Morales-Ruiz, Jean-Pierre Ramis,  Carles Sim\'o, Integrability of
Hamiltonian Systems and Differential Galois Groups of Higher
Variational Equations, {\it Ann. Sc. École Norm.
Sup.} {\bf 40} (2007) 845-884.

\bibitem{vasi} van der Put, M., Singer, M., {\it Galois theory of linear differential equations},  Springer Verlag, New York, 2003.

\bibitem{sauloy}
Jacques Sauloy, {\it Differential Galois Theory through Riemann-Hilbert Correspondence: An Elementary Introduction}, American Mathematical Society, Providence, Rhode Island, 2016.

\bibitem{sim}
Carles Simó, Somo questions looking for answers in dynamical systems, {\it Discrete and COntinous Dynamical Systems} {\bf 38} (2018) 6215-6239.

\bibitem{zi1} S.L. Ziglin, Branching of solutions and non-existence of
first integrals in Hamiltonian mechanics I, {\it Funct. Anal.
Appl.} {\bf 16} (1982)  181-189.

\bibitem{zi2} S.L. Ziglin, Branching of solutions and non-existence of
first integrals in Hamiltonian mechanics II, {\it Funct. Anal.
Appl.} {\bf 17} (1983)  6-17.



\end{thebibliography}
\end{document}